\newcommand\EFFACE[1]{}
\newtheorem{theorem}{Theorem}
\newtheorem{corollary}[theorem]{Corollary}
\newtheorem{observation}[theorem]{Observation}
\newenvironment{proof}{
\par
\noindent {\bf Proof.}\rm}{\mbox{}\hfill$\square$\par\vskip 3mm}
\let\@fnsymbol\@arabic
\def\chicg{\chi_{\rm cg}}
\def\chig{\chi_{\rm g}}
\def\colg{{\rm col}_{\rm g}}
\def\colcg{{\rm col}_{\rm cg}}
\begin{document}


\title{{\bf  A Connected Version of the Graph Coloring Game}} 
\author{Cl\'ement Charpentier\,\thanks{\texttt{clement.h.charpentier@gmail.com}}
\and Herv\'e Hocquard\,\thanks{Univ. Bordeaux, CNRS,  Bordeaux INP, LaBRI, UMR 5800, F-33400, Talence, France. \texttt{(Herve.Hocquard, Eric.Sopena)@u-bordeaux.fr}}
\and \'Eric Sopena\,\footnotemark[2]
\and Xuding Zhu\,\thanks{Department of Mathematics, Zhejiang Normal University, China. \texttt{xudingzhu@gmail.com}}
}
\maketitle

\abstract{
The graph coloring game is a two-player game in which,
given a graph~$G$ and a set of $k$ colors, the two players, Alice and Bob, 
take turns coloring properly an uncolored vertex of~$G$, Alice having the first move. 
Alice wins the game if and only if all the vertices of~$G$ are  colored. 
The game chromatic number of a graph $G$ is then defined as the smallest 
integer $k$ for which Alice has a winning strategy when playing the graph coloring game 
on~$G$ with $k$ colors.

In this paper, we introduce and study a new version of the graph coloring game
by requiring that the starting graph is connected and, after each player's turn, the subgraph induced by the set of colored vertices is connected.
The connected game chromatic number of a connected graph $G$ is then the smallest 
integer $k$ for which Alice has a winning strategy when playing the connected graph coloring game 
on~$G$ with $k$ colors.
We prove that the connected game chromatic number of every connected outerplanar graph is at most~5
and that there exist connected outerplanar graphs with connected game chromatic number~4.

Moreover, we prove that for every integer $k\ge 3$, there exist connected bipartite graphs on which
Bob wins the connected coloring game with $k$ colors,
while Alice wins the connected coloring game with two colors on every connected bipartite graph.
}

\medskip
\noindent
{\bf Keywords:} Coloring game; Marking game; Game coloring number; Game chromatic number.

\medskip
\noindent
{\bf MSC 2010:} 05C15, 05C57, 91A43.

\section{Introduction}\label{sec:introduction}

All the graphs we consider in this paper are undirected, simple, and have no loops.
For every such graph $G$, we denote by $V(G)$ and $E(G)$ its vertex set and edge set, respectively,
by $\Delta(G)$ its maximum degree, and by $N_G(v)$ the set of neighbors of the vertex $v$ in $G$.

The {\em graph coloring game} is a two-player game introduced
by Steven J. Brams (reported by Martin Gardner in his column
{\it Mathematical Games} in {\it Scientific American} in 1981~\cite{G81}) 
and rediscovered ten years later by Bodlaender~\cite{B91}. 
Given a graph~$G$ and a set~${\cal C}$ of $k$ colors, the two players, Alice and Bob, 
take turns coloring properly an uncolored vertex of~$G$, Alice having the first move. 
Alice wins the game if and only if all the vertices of~$G$ are  colored. 
In other words,  Bob wins the game if and only if, at some step
of the game, all the colors appear in the neighborhood of some uncolored
vertex. 

The {\em game chromatic number} $\chig(G)$ of~$G$ is then defined as the smallest 
integer $k$ for which Alice has a winning strategy when playing the graph coloring game 
on~$G$ with $k$ colors.
The problem of determining the game chromatic number of several graph
classes has attracted much interest in recent years
(see~\cite{BGKZ07} for a comprehensive survey of this problem),
with a particular focus on planar graphs (see e.g. \cite{DZ99,K00,KT94,WZ08,Z99,Z08})
for which the best known upper bound up to now is 17~\cite{Z08}.
Very recently, Costa, Pessoa, Sampaio and Soares~\cite{CPSS19} proved that
given a graph $G$ and an integer $k$, deciding
whether $\chig(G)\le k$ is a PSPACE-Complete problem, thus answering a longstanding
open question. 

Most of the known upper bounds on the game chromatic number of classes of graphs 
are derived from upper bounds on the game coloring number of these classes,
a parameter defined through the so-called {\em graph marking game},
formally introduced by Zhu in~\cite{Z99}.
This game is somehow similar to the graph coloring game, except that the players
mark the vertices instead of coloring them, with no restriction.
The {\em game coloring number} $\colg(G)$ of~$G$ is then defined 
as the smallest integer $k$ for which Alice has a strategy such that,
when playing the graph marking game on~$G$, every unmarked vertex has at most
$k-1$ marked neighbors.
It is worth noting here that the game coloring number is \emph{monotonic},
which means that $\colg(H)\le\colg(G)$ for every subgraph $H$ of $G$,
while this property does not hold for the game chromatic number~\cite{W05}.

Let $G$ be a graph with $\colg(G)=k$ and consider the winning strategy of Alice for the marking
game on~$G$. Applying the same strategy for the coloring game on~$G$, Alice ensures
that each uncolored vertex has at most $k-1$ colored neighbors, so that we
get $\chig(G)\le k$. Hence, the following inequalities clearly hold
for every graph~$G$.

\begin{observation}\label{obs:inequality-chig}
For every graph $G$, $\chi(G)\le \chig(G)\le\colg(G)\le \Delta(G)+1$.
\end{observation}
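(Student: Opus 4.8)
The plan is to establish the chain of four quantities by proving each of the three inequalities in turn, as they are essentially independent and each rests on a short direct argument.

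First I would handle $\chi(G)\le\chig(G)$. The key observation is that the coloring game produces a proper coloring whenever Alice wins: at every step both players are required to color an uncolored vertex \emph{properly}, so if $k=\chig(G)$ and the game ends with all vertices colored, the resulting assignment is a proper coloring of $G$ using at most $k$ colors. Hence $\chi(G)\le k=\chig(G)$. This direction uses only the definition of a winning terminal position and requires no strategic reasoning.

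Next I would prove $\chig(G)\le\colg(G)$, which is the heart of the statement and the one place where a strategy must be transferred between games. Setting $k=\colg(G)$, I would invoke Alice's winning strategy for the marking game, under which she guarantees that every unmarked vertex has at most $k-1$ marked neighbors throughout play. The idea is to let Alice play the \emph{same} vertex-selection strategy in the coloring game, identifying ``marked'' with ``colored''. Whenever it is Alice's turn to color a vertex $v$, the invariant ensures $v$ has at most $k-1$ colored neighbors, so among the $k$ available colors at least one is free and $v$ can be colored properly; the same bound guarantees Bob can never reach a position in which some uncolored vertex sees all $k$ colors. Therefore Alice wins the coloring game with $k$ colors, giving $\chig(G)\le k=\colg(G)$. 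I expect this translation step to be the main (if still modest) obstacle, since one must check that a strategy phrased in terms of marking legitimately governs the coloring game and that the degree bound really forbids a losing configuration.

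Finally I would dispatch $\colg(G)\le\Delta(G)+1$ by a trivial counting argument valid for \emph{any} play: every vertex of $G$ has at most $\Delta(G)$ neighbors in total, hence at most $\Delta(G)$ marked neighbors at any point, independently of either player's choices. Thus no strategy of Bob can force an unmarked vertex to have more than $\Delta(G)$ marked neighbors, so the condition defining $\colg(G)$ holds with $k=\Delta(G)+1$, yielding $\colg(G)\le\Delta(G)+1$. Combining the three inequalities gives the full chain $\chi(G)\le\chig(G)\le\colg(G)\le\Delta(G)+1$.
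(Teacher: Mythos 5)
Your proposal is correct and follows essentially the same route as the paper: the paper's justification consists precisely of the strategy-transfer argument for $\chig(G)\le\colg(G)$ (Alice reuses her marking strategy so that every uncolored vertex keeps at most $k-1$ colored neighbors), treating the two outer inequalities as immediate, which is exactly what you do. Your write-up merely spells out the trivial endpoints ($\chi(G)\le\chig(G)$ from the terminal proper coloring, and $\colg(G)\le\Delta(G)+1$ from the degree bound) in more detail than the paper bothers to.
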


\medskip

In this paper, we introduce and study a new version of the graph coloring game
(resp. of the graph marking game), played on connected graphs, by requiring
that, after each player's turn, the subgraph induced by the set of colored 
(resp. marked) vertices is connected.
In other words, on their turn, each player must color an uncolored vertex
(resp. mark an unmarked vertex) having at
least one colored (resp. marked) neighbor, except for Alice on her first move.

We call this new game the {\em connected graph coloring game}
(resp. the {\em connected graph marking game}).
We will denote by $\chicg(G)$
the {\em connected game chromatic number} of a connected graph $G$, that is, the smallest 
integer $k$ for which Alice has a winning strategy when playing the connected graph coloring game on
$G$ with $k$ colors,
and by $\colcg(G)$
the {\em connected game coloring number} of $G$, that is, the smallest 
integer $k$ for which Alice has a strategy such that,
when playing the connected graph marking game on~$G$, every unmarked vertex has at most
$k-1$ marked neighbors.
It is not difficult to observe that, similarly to the ordinary case, the following inequalities
hold for every connected graph~$G$.

\begin{observation}\label{obs:inequality-chicg}
For every connected graph $G$, $\chi(G)\le \chicg(G)\le\colcg(G)\le \Delta(G)+1$.
\end{observation}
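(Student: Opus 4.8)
The plan is to follow, \emph{mutatis mutandis}, the reasoning already used for the ordinary game in the paragraph preceding Observation~\ref{obs:inequality-chig}, checking at each step that the added connectivity constraint does not break the argument. I would establish the three inequalities separately, from right to left. The rightmost inequality $\colcg(G)\le\Delta(G)+1$ requires no strategy at all: every vertex of $G$ has at most $\Delta(G)$ neighbors, hence at most $\Delta(G)$ marked neighbors at any moment of any play of the connected marking game. Taking $k=\Delta(G)+1$, the requirement that every unmarked vertex have at most $k-1=\Delta(G)$ marked neighbors holds automatically, so Alice wins trivially with $\Delta(G)+1$ colors.

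For the middle inequality $\chicg(G)\le\colcg(G)$, set $k=\colcg(G)$ and let $\sigma$ be a winning strategy of Alice for the connected marking game on $G$ with parameter $k$. I would have Alice play the connected coloring game by mimicking $\sigma$. The crucial point is that the set of \emph{legal} moves is identical in both games, since legality depends only on the current set of already-played vertices: a vertex is playable precisely when it is untouched and, except on Alice's first move, has a neighbor that is already touched. Consequently, reading ``colored'' as ``marked'', the set of colored vertices after each turn coincides with the set that would be marked under $\sigma$, so the guarantee provided by $\sigma$ transfers verbatim: at every stage, every uncolored vertex has at most $k-1$ colored neighbors. Whenever a vertex $v$ is about to be colored, it therefore has at most $k-1$ colored neighbors, so at least one of the $k$ colors is available for $v$; in particular, no uncolored vertex ever sees all $k$ colors in its neighborhood. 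Hence Bob can never win, and Alice wins the connected coloring game with $k$ colors, giving $\chicg(G)\le k$.

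For the leftmost inequality $\chi(G)\le\chicg(G)$, I would simply observe that when Alice wins the connected coloring game with $k=\chicg(G)$ colors, the terminal position is a proper coloring of every vertex of $G$ using at most $k$ colors. This exhibits a proper $k$-coloring of $G$, and hence $\chi(G)\le k$.

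The only step requiring genuine care is the strategy transfer in the middle inequality. One must verify both that the ``untouched-plus-connectivity'' condition defining legal moves is exactly the same in the marking and coloring games, so that the position correspondence is preserved turn by turn, and that the neighbor bound applies to moves made by Bob as well as by Alice, which is precisely what prevents Bob from ever saturating the neighborhood of an uncolored vertex. Both points parallel the classical argument recalled before Observation~\ref{obs:inequality-chig}, and since the connectivity requirement is imposed identically on the two games, it does not interfere with the correspondence.
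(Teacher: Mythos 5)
Your proposal is correct and follows essentially the same route as the paper, which states this observation without a separate proof precisely because it is the argument given before Observation~\ref{obs:inequality-chig} (trivial bounds on both ends, plus transferring Alice's marking strategy to the coloring game) carried over to the connected setting. Your additional check that the connectivity constraint defines the same set of legal vertex choices in both games is exactly the point the paper implicitly relies on when it says the inequalities hold ``similarly to the ordinary case.''
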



It is proved in~\cite{W05} that for every positive integer $n$, $\chig(K_{n,n}-M)=n$,
where $K_{n,n}-M$ denotes the complete bipartite graph with $n$ vertices in each part, minus a perfect matching.
We prove in Section~\ref{sec:bipartite} that $\chicg(G)=2$ for every nonempty connected bipartite graph $G$,
which shows, since the graph $K_{n,n}-M$ is bipartite, 
that the difference $\chig(G)-\chicg(G)$ can be arbitrarily large.

\medskip

One of the main open, and rather intriguing, question concerning the graph coloring
game is the following:
assuming that Alice has a winning strategy for the graph coloring game on a
graph $G$ with $k$ colors, is it true that she has also a winning strategy with $k+1$ colors?
We will prove in Section~\ref{sec:bipartite} that the answer is ``no'' for the connected
version of the coloring game.
More precisely, we will prove that 
for every integer $k\ge 3$, there exist connected bipartite graphs on which
Bob wins the connected coloring game with $k$ colors,
while Alice wins the connected coloring game with two colors on every connected bipartite graph.

The ``connected variant'' of other types of games on graphs have been considered in the
literature. This is the case for instance for
the domination game~\cite{BFS,I19},
the surveillance game~\cite{FGJMN14,GLMNP15},
the graph searching game~\cite{BFFFNST12,BGTZ16,FN08},
or Hajnal's triangle-free game~\cite{MS11,S92}.
However, to our knowledge, the connected variant of the graph
coloring game has not been considered yet.

\medskip

Our paper is organized as follows. 
We consider bipartite graphs in Section~\ref{sec:bipartite}, and 
outerplanar graphs in Section~\ref{sec:outerplanar}.
We finally propose some directions for future research in Section~\ref{sec:discussion}.


\section{Bipartite graphs}\label{sec:bipartite}

We consider the case of bipartite graphs in this section.
We will prove that 
for every integer $k\ge 3$, there exist connected bipartite graphs on which
Bob wins the connected coloring game with $k$ colors,
while Alice wins the connected coloring game with two colors on every bipartite graph.

It is easy to see that Alice always wins when playing
the connected coloring game on a connected bipartite graph $G$ with two colors:
thanks to the connectivity constraint, the first move of Alice 
forces all the next moves to be consistent with a proper 2-coloring
of~$G$.

\begin{theorem}\label{th:bipartite}
For every connected bipartite graph $G$, $\chicg(G)\le 2$.
\end{theorem}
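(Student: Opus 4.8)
The plan is to exploit the fact that a connected bipartite graph admits an essentially unique proper $2$-coloring, so that the connectivity constraint leaves neither player any genuine freedom: every move after the first is forced to agree with this coloring, and consequently the game can never get stuck before all vertices are colored.

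First I would fix the two colors $\{1,2\}$ and record the structural fact underlying everything: since $G$ is connected and bipartite, once the color of a single vertex is chosen there is a unique proper $2$-coloring $c\colon V(G)\to\{1,2\}$ extending it (the color of each vertex is determined by the parity of its distance to the fixed vertex). Alice's first move colors some vertex $v_0$, say with color $1$; I would let $c$ be the unique proper $2$-coloring of $G$ with $c(v_0)=1$.

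The core of the argument is an induction on the number of colored vertices showing that the partial coloring produced at every stage of the game is a restriction of $c$. Suppose this holds just before some player colors a vertex $u$. By the connectivity requirement (and since $u\ne v_0$), $u$ has at least one already-colored neighbor; because $G$ is bipartite, \emph{every} neighbor of $u$ lies in the color class opposite to $u$, so by the induction hypothesis every colored neighbor of $u$ carries the single color different from $c(u)$. Hence color $c(u)$ is available for $u$ while the other color is forbidden; with only two colors in play, the player is \emph{forced} to assign $u$ the color $c(u)$, preserving the invariant. The first move trivially satisfies the invariant as well.

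It remains to observe that the game can never terminate prematurely, which is the only point requiring a moment's care. As long as some vertex is still uncolored, the set of colored vertices is a proper nonempty subset of $V(G)$; since $G$ is connected, some uncolored vertex is adjacent to it, so a legal move always exists, and by the previous paragraph that move is guaranteed to be a proper coloring. Thus every move succeeds, and as $V(G)$ is finite the game ends with all vertices colored, so Alice wins with two colors regardless of Bob's play; therefore $\chicg(G)\le 2$. I do not expect any serious obstacle: the entire difficulty is the bookkeeping observation that bipartiteness together with the connectivity constraint collapse the game to the forced extension of a single unique coloring.
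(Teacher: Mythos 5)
Your proof is correct and follows essentially the same approach as the paper: Alice colors an arbitrary first vertex, and the connectivity constraint together with bipartiteness forces every subsequent move to agree with the unique proper $2$-coloring extending that choice, so no vertex can ever become saturated. Your write-up merely makes explicit the induction and the non-termination point that the paper's (much terser) proof leaves implicit.
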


\begin{proof}
Let $G$ be any connected bipartite graph.
The strategy of Alice is as follows.
On her first move, she picks any vertex $v$ of $G$ and gives it color~1.
From now on, each play will color some vertex having at least one of its
neighbors already colored, so that, since $G$ is bipartite, this eventually
leads to a proper 2-coloring of $G$.
\end{proof}

However, for every integer $k\ge 3$, there are connected bipartite graphs on which Bob
wins the connected coloring game with $k$~colors.

\begin{theorem}\label{th:bipartite-Bob}
For every integer $k\ge 3$, there exists a connected bipartite graph $G_k$ on which
Bob wins the connected coloring game with $k$~colors.
\end{theorem}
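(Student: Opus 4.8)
The plan is to build $G_k$ from \emph{hub} vertices of degree at least $k$ joined by internally disjoint paths, so that $G_k$ contains cycles — this last feature is essential. Observe first that the connectivity rule makes Bob's task hopeless on any tree: if $v$ is a cut vertex, its neighbors lie in distinct components of $G-v$, so the connected colored region can never contain two neighbors of $v$ without already containing $v$; hence on a tree no uncolored vertex ever sees two colors and Alice wins for free. Therefore $G_k$ must be organized around even cycles, through which several neighbors of a prospective \emph{target} vertex can be colored while the target stays uncolored. The basic gadget I would use is a bipartite generalized theta graph: two hubs joined by $k$ internally disjoint paths of length three. Deleting one hub leaves all the path-neighbors of the other hub in a single component, so Bob can indeed color many neighbors of a hub before that hub is colored.

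By the winning condition (which the connectivity rule does not change), Bob wins as soon as some uncolored vertex sees all $k$ colors in its neighborhood, since such a vertex can never be properly colored. Bob's basic tactic is, on each of his moves, to color an available neighbor of a target hub with a color not yet present around that hub, raising by one the number of distinct colors the hub sees. The obstacle for a \emph{single} hub is that, the instant its first neighbor is colored, the hub becomes available and Alice colors it on her next move, killing the threat while the hub has seen at most two colors. Consequently the construction must provide several hubs, wired through shared neighbors, so that a single Bob move advances more than one threat; the crux of the strategy is to reach a \emph{double threat} — a position with two distinct uncolored targets, each missing exactly one color and each completable by an independent Bob move. Alice can neutralize at most one target per turn (by coloring it, or by being unable to block the other completion), so Bob then finishes the surviving target.

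The design of $G_k$ — for which I would take a suitably layered arrangement of the theta gadget whose hubs share neighbors, so that colorings propagate threats — must guarantee three things at once: that it is connected and bipartite; that the connectivity rule funnels the coloring through the targets' neighborhoods while keeping spare targets reachable for Bob but not yet defused; and that each of Bob's new-color moves is legal, meaning the neighbor he colors can properly receive a color absent from the target's neighborhood. I would maintain an invariant tracking, after each of Alice's moves, the number of \emph{live} targets (uncolored, with a reachable uncolored neighbor, missing few colors) against the number of defensive moves Alice has spent, and show the former outpaces the latter.

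I expect the main obstacle to be ruling out all of Alice's defenses and, relatedly, engineering the wiring that forces the double threat. Alice has three defensive resources — coloring a threatened target outright, repeating an already-used color on a target's neighbor to slow the color count, and exploiting connectivity to stall the frontier away from the targets — and the argument must show that none suffices against Bob's strategy, including a careful treatment of Alice's first move, which can pre-empt one target. Handling the bipartite parity so that Bob always has a genuinely new color available to place, while Alice's repetitions cannot keep \emph{every} target below $k$ colors, is the delicate quantitative heart of the proof, and it is precisely this balance that will dictate how many hubs and paths $G_k$ must contain as a function of $k$.
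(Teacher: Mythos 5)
Your opening observations are sound and match the motivation behind the paper's construction: on a tree the connectivity rule prevents any uncolored vertex from ever seeing two colors, so cycles are needed, and a single high-degree target is hopeless because Alice colors it the moment it becomes playable, so the threats must be multiplied. But from that point on your proposal is a plan, not a proof: the graph $G_k$ is never actually constructed, Bob's strategy is never actually specified, and the two steps you yourself identify as the crux --- forcing the ``double threat'' and ruling out all of Alice's defenses --- are exactly the parts left open. As it stands there is nothing that can be verified.

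Moreover, the specific shape of your plan (a bounded number of pre-identified hub targets wired through shared neighbors, plus an invariant counting live targets against Alice's defensive moves) is likely unworkable, and the paper's proof avoids it for a reason. Against a handful of fixed targets, Alice's defense of coloring a threatened hub as soon as it becomes playable is very strong: between the first colored vertex in a hub's neighborhood and Alice's reply, Bob gains essentially one move per target, so with $O(1)$ targets Alice can keep every hub below $k$ seen colors. The paper's key idea, absent from your proposal, is to make the family of targets enormous and to let the target be chosen \emph{after the fact}: start from a $C_4$-free bipartite graph $H_k$ with parts $A$, $B$ and minimum degree at least $k^2$ (e.g.\ the incidence graph of a projective plane), and add a new vertex $v_S$ for \emph{every} $k$-subset $S$ of $B$. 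Bob never commits to a particular target; he colors two vertices $u,v\in A$ with distinct colors, then cross-colors a neighbor $u'$ of $u$ with $v$'s color and a neighbor $v'$ of $v$ with $u$'s color (this disposes of the properness/parity issue you flag as delicate), and then rainbow-colors some $k$-subset $X\subseteq N_{H_k}(u)\cup N_{H_k}(v)$ containing $u'$ and $v'$. The $C_4$-freeness and the degree bound guarantee that each move of Alice forbids at most $k$ of the at least $2k^2-1$ candidate vertices, so she cannot block this; whichever set $X$ Bob completes, the vertex $v_X$ is saturated and Bob wins. No double-threat bookkeeping is needed, and Alice's target-coloring defense is useless because she would have to color $\binom{|B|}{k}$ vertices. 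To complete your argument you would need either to import this ``adaptively chosen target among exponentially many'' idea, or to exhibit a concrete finite gadget together with a fully analyzed strategy; neither is present in the proposal.
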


\begin{proof}
Let $H_k$ be any $C_4$-free bipartite graph with minimum degree at least $k^2$ and let $A$ and $B$ denote the partite sets of $H_k$. (Consider for instance the incidence graph of a projective plane
of dimension $d\ge k^2$; such a graph is a $(d+1)$-regular bipartite graph with girth 6.)
Let now $G_k$ be the (connected) bipartite graph obtained from $H_k$ by adding, for each $k$-subset $S$ of $B$, a new vertex
$v_S$ adjacent to all vertices of $S$.

We now define the strategy of Bob for playing the connected coloring game on $G_k$ as follows.
In his first moves (at most three, depending on the moves of Alice), Bob colors two vertices of $A$,
say $u$ and $v$, with two different colors. 
In his next two moves, Bob colors a neighbor $u'$ of $u$ in $B$ with the same color as $v$,
and a neighbor $v'$ of $v$ in $B$ with the same color as $u$.
Since the minimum degree of $H_k$ is at least $k^2$ and $H_k$ is $C_4$-free, Alice cannot prevent Bob from doing so.

Now, Bob colors a $k$-subset $X \subseteq N_{H_k}(u) \cup N_{H_k}(v)$ containing $u'$ and $v'$
with $k$ distinct colors. Again, Alice cannot prevent Bob from doing so since each move
of Alice ``forbids'' at most $k$ uncoloured vertices in $N_{H_k}(u) \cup N_{H_k}(v)$
(each vertex of $G_k$ has at most $k$ neighbors in this set).

After that, the vertex $v_X$ cannot be colored and Bob wins the game.
\end{proof}


\section{Outerplanar graphs}\label{sec:outerplanar}

We consider in this section the case of outerplanar graphs.
An \emph{outerplanar graph} is a graph that can be
embedded on the plane in such a way that there are no edge crossings and all its vertices lie on 
the outer face. Recall that a graph is outerplanar if and only if it does not contain
$K_4$ or $K_{2,3}$ as a minor.

Concerning the ordinary coloring game,
Kierstead and Trotter proved in~\cite{KT94} that there
exist outerplanar graphs with game chromatic number at least~6,
and Guan and Zhu proved in~\cite{GZ99} that the game chromatic
number of every outerplanar graph is at most~7.
In~\cite{KY05} Kierstead and Yang have proved that the bound is tight
for the game coloring number of outerplanar graphs.
We will prove that the connected game chromatic
number of every outerplanar graph is at most~5 and that there exist
outerplanar graphs with connected game chromatic number~4.

%

\medskip

Recall that an outerplanar graph is \emph{maximal} if adding any edge makes it non outerplanar.
Therefore, an outerplanar graph is maximal if and only if, in all its
outerplanar embeddings, all faces are triangles, except possibly the outer face
(an outerplanar graph is thus a triangulation of a plane cycle).
In particular, every maximal outerplanar graph is connected.
Our goal in this section is to prove that the connected coloring number of 
every connected outerplanar graph is at most~5.

When playing the connected coloring game
on a connected graph~$G$, we will say that an uncolored vertex in~$G$
is \emph{saturated}
if each of the available colors appears in its neighborhood. 
Observe that Bob wins
the connected coloring game on $G$ if and only if he has a strategy such that,
at some point in the game, an uncolored vertex in $G$ becomes saturated.
Similarly, when trying to prove that the connected game coloring number
of some graph~$G$ is at most $k$, we will say that an unmarked vertex in~$G$
is saturated if it has at least $k$ marked neighbors.
Again, the connected game coloring number of~$G$ is at least $k+1$ if and only if
Bob has a strategy such that,
at some point in the game, an unmarked vertex in $G$ becomes saturated.

Finally, we will say that a vertex in $G$ is 
{\em playable} if it is uncolored (resp. unmarked) and has at least one colored (resp. marked) neighbor.
Moreover, when considering the connected marking game, we will say that a vertex
is {\em threatened} if it is unmarked, 
has $k-1$  marked neighbors and at least one playable neighbor.
In that case, note that if Bob plays on a playable neighbor of any threatened vertex, then Alice loses the game.
A winning strategy of Alice for the connected marking game
 thus consists in ensuring that, after each of her moves,
the considered graph has no threatened vertex.

\medskip

\begin{figure}
\begin{center}
\begin{tikzpicture}[x=1cm,y=1cm]
\node[left] (V0) at (-1,2) {$V_0$}; \draw[->,dashed] (V0) -- (1.7,2);
\node[left] (V1) at (-1,0) {$V_1$}; \draw[->,dashed] (V1) -- (-0.3,0);
\node[left] (V2) at (-1,-2) {$V_2$}; \draw[->,dashed] (V2) -- (-0.3,-2);
\node[scale=0.7,draw,circle,fill=black] (u) at (2,2) {}; \node[above] at (2,2.2) {$u$};
\node[scale=0.7,draw,circle,fill=black] (up) at (3,2) {}; \node[above] at (3,2.2) {$u'$};
\node[scale=0.7,draw,circle,fill=black] (v0) at (0,0) {};
\node[scale=0.7,draw,circle,fill=black] (v1) at (1,0) {};
\node[scale=0.7,draw,circle,fill=black] (v2) at (3,0) {};
\node[scale=0.7,draw,circle,fill=black] (v3) at (4,0) {};
\node[scale=0.7,draw,circle,fill=black] (v4) at (5,0) {};
\draw[-] (u) -- (up);
\draw[-] (u) -- (v0);
\draw[-] (v0) -- (v1);
\draw[-,very thick,dotted] (v1) -- (v2);
\draw[-] (v2) -- (v4);
\draw[-] (v4) -- (up);
\node[scale=0.7,draw,circle,fill=black] (w0) at (0,-2) {};
\node[scale=0.7,draw,circle,fill=black] (w1) at (1.3,-2) {};
\node[scale=0.7,draw,circle,fill=black] (w2) at (2.7,-2) {};
\node[scale=0.7,draw,circle,fill=black] (w3) at (4,-2) {};
\node[scale=0.7,draw,circle,fill=black] (w4) at (5.2,-2) {};
\node[scale=0.7,draw,circle,fill=black] (w5) at (6.4,-2) {};
\draw[-] (v0) -- (w0);
\draw[-] (v1) -- (w1);
\draw[-] (v2) -- (w2);
\draw[-] (v3) -- (w3);
\draw[-] (v3) -- (w4);
\draw[-] (v4) -- (w5);
\draw[-,very thick,dotted] (w0) -- (w1);
\draw[-,very thick,dotted] (w2) -- (w3);
\draw[-,very thick,dotted] (w4) -- (w5);
\node[right] at (0,-2.7) {{\it Etc.}};
\node at (2.7,-3.5) {(a)};
\end{tikzpicture}
\hskip 1cm
\begin{tikzpicture}[x=1cm,y=1cm]
\node[scale=0.7,draw,circle,fill=black] (v) at (2,2) {}; \node[above] at (2,2.2) {$v$};
\node[scale=0.7,draw,circle,fill=black] (vprime) at (3,2) {}; \node[above] at (3,2.2) {$v'$};
\node[scale=0.7,draw,circle,fill=black] (v1) at (-1,0) {}; \node[below] at (-1,-0.2) {$v_1$};
\node[scale=0.7,draw,circle,fill=black] (v2) at (0,0) {}; \node[below] at (0,-0.2) {$v_2$};
\node[scale=0.7,draw,circle,fill=black] (vp) at (2,0) {}; \node[below] at (2,-0.2) {$v_p$};
\node[scale=0.7,draw,circle,fill=black] (vk1) at (4,0) {}; \node[below] at (4,-0.2) {$v_{k-1}$};
\node[scale=0.7,draw,circle,fill=black] (vk) at (5,0) {}; \node[below] at (5,-0.2) {$v_k$};
\draw[-] (v) -- (vprime);
\draw[-] (v) -- (v1);
\draw[-] (v) -- (v2);
\draw[-] (v) -- (vp);
\draw[-] (v1) -- (v2);
\draw[-,very thick,dotted] (v2) -- (vk1);
\draw[-] (vk1) -- (vk);
\draw[-] (vprime) -- (vp);
\draw[-] (vprime) -- (vk1);
\draw[-] (vprime) -- (vk);
\node at (2,-2.5) {(b)};
\end{tikzpicture}

\caption{\label{fig:out-structure}The structure of a maximal outerplanar graph.}
\end{center}
\end{figure}

We first describe more precisely the structure of maximal outerplanar graphs, which will
be used for defining the strategy of Alice.
The structure of maximal outerplanar graphs has been studied by several authors
(see e.g.~\cite{A09,HPS85,LMN12,M79}, just to cite a few).
In particular, the neighborhood of every vertex in a maximal outerplanar graph induces a path.
Moreover, the {\em triangle graph} $T(G)$ of a maximal (embedded) outerplanar graph $G$,
whose vertices are the triangle faces of $G$, incident faces being linked by an edge,
is necessarily a tree.

Let $G$ be a maximal (embedded) outerplanar graph.
An edge belonging to the outer face of $G$ is an \emph{outer edge} of $G$.
Let us choose and fix any outer edge $e=uu'$ of $G$. 
The \emph{distance} from any vertex $v$ to the edge $e$ is defined as $d_G(v,e)=\min\left\{d_G(v,u),d_G(v,u')\right\}$.
For every integer $i\ge 0$, let $V_i$ denote the set of vertices at distance $i$ from $e$.
Observe that the subgraph $G[V_i]$ of $G$ induced by each set $V_i$ is a linear forest, that is,
a disjoint union of paths, since otherwise $G$ would contain a $K_{2,3}$ as a minor. 
In particular, $G[V_0]$ is the edge $uu'$ and $G[V_1]$ is a path.
Therefore, $G$ can be viewed as a ``tree of trapezoids'', as illustrated
in Figure~\ref{fig:out-structure}(a).

Each of these trapezoids has the structure depicted in Figure~\ref{fig:out-structure}(b).
Both $v$ and $v'$ belong to some $V_i$, $i\ge 0$, 
$vv'$ being an edge-cut of $G$,
while $v_1,\dots,v_k$, $k\ge 2$, belong to $V_{i+1}$.
The vertices $v$ and $v'$ are the \emph{parents} of the \emph{children vertices} $v_1,\dots,v_k$,
the edge $vv'$ is the \emph{root edge} of the trapezoid, 
the unique vertex $v_p$, $1\le p\le k$, which is joined by an edge to both $v$ and $v'$
is the \emph{pivot} of the trapezoid
(its uniqueness comes from the fact that $G$ does not contain $K_4$ as a minor).
We will denote by $T_{vv'}$ the trapezoid whose root edge is $vv'$, and by $p(vv')$
the pivot of $T_{vv'}$.
Note that each vertex $v_i$, $1\le i\le k$, is a neighbor of at least one of its parents,
and that only the pivot $v_p$ is a neighbor of both its parents.
Moreover, we will say that $v_{i-1}$, $2\le i\le k$, is the \emph{left neighbor} of $v_i$,
while $v_{i+1}$, $1\le i\le k-1$, is the \emph{right neighbor} of $v_i$
(this ordering is well defined since the embedding of $G$ is given).
%


Observe that if there is no trapezoid of the form $T_{vv'}$ or $T_{v'v}$ in $G$ for some vertex $v$,
then the degree of $v$ is at most~4 (it is~4 only if $v$ is the pivot of some trapezoid),
so that $v$ cannot be a threatened vertex.
Note also that every vertex belongs to at most two root edges.

Based on the drawing of the outerplanar graph depicted in Figure~\ref{fig:out-structure}(a),
we can define a total ordering $\le_G$ of the vertices of $G$,
obtained by listing the vertices of $V_0$ from left to right, then 
the vertices of $V_1$ from left to right, and so on.
Finally, we will say that a vertex $w_1$, belonging to a trapezoid $T_{v_1v'_1}$,
\emph{lies above} a vertex $w_2$, belonging to a trapezoid $T_{v_2v'_2}$,
if every shortest path from $\{v_2,v'_2\}$ to $\{u,u'\}$ goes through $v_1$ or $v'_1$.


\medskip

\newcommand\TRAPEZE[2]{
\draw[-] (#1,#2) -- (#1+3,#2);
\draw[-] (#1,#2) -- (#1+1,#2+2);
\draw[-] (#1+1,#2+2) -- (#1+2,#2+2);
\draw[-] (#1+2,#2+2) -- (#1+3,#2);
}
\newcommand\TRAPEZEDASHED[2]{
\draw[-] (#1,#2) -- (#1+3,#2);
\draw[-] (#1,#2) -- (#1+1,#2+2);
\draw[-,dashed] (#1+1,#2+2) -- (#1+2,#2+2);
\draw[-] (#1+2,#2+2) -- (#1+3,#2);
}

\begin{figure}
\begin{center}
\begin{tikzpicture}[x=0.8cm,y=0.8cm]
\TRAPEZEDASHED{0}{0}
\node[scale=0.7,draw,circle,fill=black] (v) at (1,2) {};
\node[scale=0.7,draw,circle,fill=white] (ai) at (2,2) {};
\node[scale=0.7,draw,circle,fill=black] (bi) at (1.2,0) {}; \node[below] at (1.2,-0.2) {$b_i$};
\draw[-] (v) -- (bi);
    \draw[->] (3.5,1) -- (4.5,1);
\TRAPEZEDASHED{5+0}{0}
\node[scale=0.7,draw,circle,fill=black] (vv) at (5+1,2) {};
\node[scale=0.7,draw,circle,fill=black] (aai) at (5+2,2) {}; \node[above] at (5+2,2.2) {$a_i$};
\node[scale=0.7,draw,circle,fill=black] (bbi) at (5+1.2,0) {}; \node[below] at (5+1.2,-0.2) {$b_i$};
\draw[-] (vv) -- (bbi);
%
\end{tikzpicture}
\hskip 1.5cm
\begin{tikzpicture}[x=0.8cm,y=0.8cm]
\TRAPEZEDASHED{0}{0}
\node[scale=0.7,draw,circle,fill=white] (ai) at (1,2) {};
\node[scale=0.7,draw,circle,fill=black] (v) at (2,2) {};
\node[scale=0.7,draw,circle,fill=black] (bi) at (1.2,0) {}; \node[below] at (1.2,-0.2) {$b_i$};
\draw[-] (v) -- (bi);
    \draw[->] (3.5,1) -- (4.5,1);
\TRAPEZEDASHED{5+0}{0}
\node[scale=0.7,draw,circle,fill=black] (aai) at (5+1,2) {}; \node[above] at (5+1,2.2) {$a_i$};
\node[scale=0.7,draw,circle,fill=black] (vv) at (5+2,2) {}; 
\node[scale=0.7,draw,circle,fill=black] (bbi) at (5+1.2,0) {}; \node[below] at (5+1.2,-0.2) {$b_i$};
\draw[-] (vv) -- (bbi);
%
\end{tikzpicture}
\vskip 0.1cm
Rule R1 (the vertex $a_i$ must be playable)
\vskip 0.7cm
\begin{tikzpicture}[x=0.8cm,y=0.8cm]
\TRAPEZEDASHED{0}{0}
\node[scale=0.7,draw,circle,fill=black] (v) at (1,2) {};
\node[scale=0.7,draw,circle,fill=black] (bi) at (2,2) {}; \node[above] at (2,2.2) {$b_i$};
\node[scale=0.7,draw,circle,fill=white] (ai) at (1.5,0) {}; 
\draw[-,dashed] (v) -- (ai);
\draw[-,dashed] (ai) -- (bi);
    \draw[->] (3.5,1) -- (4.5,1);
\TRAPEZEDASHED{5+0}{0}
\node[scale=0.7,draw,circle,fill=black] (vv) at (5+1,2) {};
\node[scale=0.7,draw,circle,fill=black] (bbi) at (5+2,2) {}; \node[above] at (5+2,2.2) {$b_i$};
\node[scale=0.7,draw,circle,fill=black] (aai) at (5+1.5,0) {}; \node[below] at (5+1.5,-0.2) {$a_i$};
\draw[-,dashed] (vv) -- (aai);
\draw[-,dashed] (aai) -- (bbi);
%
\end{tikzpicture}
\hskip 1.5cm
\begin{tikzpicture}[x=0.8cm,y=0.8cm]
\TRAPEZEDASHED{0}{0}
\node[scale=0.7,draw,circle,fill=black] (v) at (1,2) {};
\node[scale=0.7,draw,circle,fill=black] (bi) at (2,2) {}; \node[above] at (1,2.2) {$b_i$};
\node[scale=0.7,draw,circle,fill=white] (ai) at (1.5,0) {}; 
\draw[-,dashed] (v) -- (ai);
\draw[-,dashed] (ai) -- (bi);
    \draw[->] (3.5,1) -- (4.5,1);
\TRAPEZEDASHED{5+0}{0}
\node[scale=0.7,draw,circle,fill=black] (vv) at (5+1,2) {};
\node[scale=0.7,draw,circle,fill=black] (bbi) at (5+2,2) {}; \node[above] at (5+1,2.2) {$b_i$};
\node[scale=0.7,draw,circle,fill=black] (aai) at (5+1.5,0) {}; \node[below] at (5+1.5,-0.2) {$a_i$};
\draw[-,dashed] (vv) -- (aai);
\draw[-,dashed] (aai) -- (bbi);
%
\end{tikzpicture}
\vskip 0.1cm
Rule R2 ($a_i$ is a pivot and at least one of the dashed edges incident with $a_i$ must exist)
\vskip 0.7cm
\begin{tikzpicture}[x=0.8cm,y=0.8cm]
\TRAPEZEDASHED{0}{0}
\node[scale=0.7,draw,circle,fill=white] (ai) at (1,2) {};
\node[scale=0.7,draw,circle,fill=black] (bi) at (2,2) {}; \node[above] at (2,2.2) {$b_i$};
\draw[-] (ai) -- (0.6,2.7); \draw[-] (ai) -- (1.4,2.7);
    \draw[->] (3.5,1) -- (4.5,1);
\TRAPEZEDASHED{5+0}{0}
\node[scale=0.7,draw,circle,fill=black] (aai) at (5+1,2) {}; \node[above] at (5+1,2.2) {$a_i$};
\node[scale=0.7,draw,circle,fill=black] (bbi) at (5+2,2) {}; \node[above] at (5+2,2.2) {$b_i$};
\draw[-] (aai) -- (5+0.6,2.7); \draw[-] (aai) -- (5+1.4,2.7);
%
\end{tikzpicture}
\hskip 1.5cm
\begin{tikzpicture}[x=0.8cm,y=0.8cm]
\TRAPEZEDASHED{0}{0}
\node[scale=0.7,draw,circle,fill=black] (bi) at (1,2) {}; \node[above] at (1,2.2) {$b_i$};
\node[scale=0.7,draw,circle,fill=white] (ai) at (2,2) {}; 
\draw[-] (ai) -- (1+0.6,2.7); \draw[-] (ai) -- (1+1.4,2.7);
    \draw[->] (3.5,1) -- (4.5,1);
\TRAPEZEDASHED{5+0}{0}
\node[scale=0.7,draw,circle,fill=black] (bbi) at (5+1,2) {}; \node[above] at (5+1,2.2) {$b_i$};
\node[scale=0.7,draw,circle,fill=black] (aai) at (5+2,2) {}; \node[above] at (5+2,2.2) {$a_i$};
\draw[-] (aai) -- (1+5+0.6,2.7); \draw[-] (aai) -- (1+5+1.4,2.7);
%
\end{tikzpicture}
\vskip 0.3cm
Rule R3 (the vertex $a_i$ must be playable)
\caption{\label{fig:out-strategy}The strategy of Alice on outerplanar graphs (Rules R1, R2 and R3).}
\end{center}
\end{figure}

We now describe the strategy of Alice when playing the connected marking game
on a connected outerplanar graph $G$. 
Let $uu'$ be any outer edge of $G$, and $G_m$ be any maximal outerplanar graph
containing $G$ as a subgraph, and such that $uu'$ is also an outer edge of $G_m$.
In the following, we assume that we are given a trapezoidal representation of $G_m$,
starting from the edge $uu'$, as described above. Moreover, we can also assume that
for every trapezoid $T_{vv'}$ of $G_m$, the pivot $p(vv')$ has been chosen
in such a way that it is linked by an edge in $G$ to at least one vertex from $\{v,v'\}$.
%
This will allow us to speak about children or parent vertices (with respect to $G_m$)
even if the corresponding edges do not belong to $G$, and to use the ordering $\le_{G_m}$
of the vertices of $G$.

Let us denote by $a_i$, $i\ge 0$, the vertex marked by Alice on her $(i+1)$-th move,
and by $b_i$, $i\ge 1$, the  vertex marked by Bob on his $i$-th move, so that the
sequence of moves (that is, marked vertices) is  $a_0,b_1,a_1,\dots,b_i,a_i,\dots$
Hence, $a_0$ is the vertex marked by Alice on her first move and, for every $i\ge 1$,
$a_i$ is the ``response'' of Alice to the move $b_i$ of Bob.

The strategy of Alice will then consist in applying the first of the 
following rules that can be applied (see Figure~\ref{fig:out-strategy} for an illustration
of Rules R1, R2 and R3) for each of her moves. 

\begin{itemize}
\item[R0:] $a_0:=u$.
\item[R1:] If $v$ is a playable unmarked parent of $b_i$, then $a_i:=v$.
\item[R2:] If $b_i$ belongs to a root edge $vb_i$ or $b_iv$, $v$ is marked
and $p(vb_i)$ is playable, then $a_i:=p(vb_i)$.
\item[R3:] If $b_i$ belongs to a root edge $vb_i$ or $b_iv$, $v$ is unmarked,  $v$ is a pivot
and $v$ is playable, then $a_i:=v$.
\item[R4:] If none of the above rules can be applied, and there are still unmarked
vertices in $G$, then we let $a_i:=w$, where $w$ is the smallest 
(with respect to the ordering $\le_{G_m}$) playable vertex.
%
%
%
\end{itemize}

Note that on his first move, Bob must mark either the vertex $u'$, in which case Alice will apply Rule R2
on her second move,
or some neighbor $v\neq u'$ of $u$, in which case Alice will apply rule R1 and mark $u'$ on her second move
(recall that the edge $uu'$ belongs to $G$).
Moreover, if Bob marks a child vertex $w$ of some trapezoid $T_{vv'}$,
then at least one of $v$, $v'$ must be marked (by the connectivity constraint), and Alice
will immediately apply Rule R1 if one of them is unmarked and $vv'$ is an edge in $G$.
%
%
These remarks are summarized in the two following observations.

\begin{observation}\label{obs:out-uu'}
After the second move of Alice, both vertices $u$ and $u'$ are marked.
\end{observation}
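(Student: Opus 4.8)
The plan is to analyze Bob's first move $b_1$ and verify that in every case Alice's strategy forces $u'$ to be marked, given that $u = a_0$ is already marked by Rule R0.

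First I would observe that after Alice's opening move $a_0 = u$, the vertex $u$ is the unique marked vertex, so the connectivity constraint obliges Bob to mark a neighbor of $u$; that is, $b_1 \in N_G(u)$. Since $G[V_0]$ consists of the single edge $uu'$, the only neighbor of $u$ lying in $V_0$ is $u'$, while every other neighbor of $u$ lies in $V_1$ and is therefore a child of the trapezoid $T_{uu'}$ whose root edge is $uu'$.

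Then I would split into two cases according to $b_1$. If $b_1 = u'$, then both $u$ and $u'$ are already marked after Bob's first move, and they of course remain marked after $a_1$ (played, as noted in the text, according to Rule R2), so the claim holds. If $b_1 \neq u'$, then $b_1$ is a child of $T_{uu'}$, so both $u$ and $u'$ are parents of $b_1$. The vertex $u'$ is unmarked, and since $uu' \in E(G)$ with $u$ already marked, $u'$ is playable; hence $u'$ is a playable unmarked parent of $b_1$, which is exactly the hypothesis of Rule R1. As R1 is the first applicable rule, Alice sets $a_1 = u'$. In both cases, $u$ and $u'$ are marked after the second move of Alice.

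The only point requiring care is confirming that Rule R1 is genuinely applicable in the second case, i.e. that $u'$ qualifies as a \emph{playable} unmarked parent of $b_1$; this is immediate from the standing assumption that the edge $uu'$ belongs to $G$ (and not merely to $G_m$), which guarantees that $u'$ has the marked neighbor $u$. I expect no real obstacle here: the observation follows directly from the definitions of Rules R0--R2 combined with the connectivity constraint on Bob's first move.
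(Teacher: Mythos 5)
Your proof is correct and follows essentially the same route as the paper, which justifies this observation with exactly the same case analysis: Bob's first move is forced by connectivity to be a neighbor of $u$, and either $b_1=u'$ (so both are already marked) or $b_1$ is a child of $T_{uu'}$, in which case Rule R1 makes Alice mark the playable unmarked parent $u'$, playability resting on the fact that $uu'$ is an edge of $G$ itself. Your additional care in checking that $b_1\neq u'$ must lie in $V_1$ and that R1 has priority is a slightly more explicit write-up of what the paper states in its pre-observation remark, but it is not a different argument.
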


\begin{observation}\label{obs:out-both-parents}
After each move of Alice, if $w$ is a marked child vertex of a trapezoid $T_{vv'}$
and $vv'$ is an edge in $G$,
then both $v$ and $v'$ are marked.
\end{observation}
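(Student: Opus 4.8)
The plan is to prove Observation~\ref{obs:out-both-parents} by induction on the number of moves played so far, strengthening the statement slightly so that the property is required to hold after \emph{every} move, Bob's as well as Alice's. The crucial simplification is that marking is monotone: once a vertex is marked it stays marked, so a single move can endanger the invariant only through the one vertex it newly marks. Thus at each step it suffices to verify the property for the at most one newly marked vertex that is a child vertex of a trapezoid whose root edge belongs to $G$. The base case is immediate: after Alice's first move only $u=a_0$ is marked (Rule R0), and $u\in V_0$ is not a child vertex, so there is nothing to check; the first moves around the edge $uu'$ are governed by Observation~\ref{obs:out-uu'}.

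The heart of the argument concerns a newly marked child vertex $w$ of a trapezoid $T_{vv'}$ with $vv'\in E(G)$. By the connectivity constraint, at the moment $w$ is marked it has a marked neighbor in $G$. I would invoke here the structural facts recalled before the observations, namely that $vv'$ is an edge-cut and that root edges join siblings of a common trapezoid, to conclude that the $G$-neighbors of a child vertex are among its parents $v,v'$ and its siblings inside the same trapezoid; hence this marked neighbor is either a parent or a sibling of $w$. If it is a sibling $s$ marked in an earlier round, then $s$ is a marked child of $T_{vv'}$ and the induction hypothesis already yields that both $v$ and $v'$ are marked. If it is a parent, say $v$, then $v'$ is the only possibly unmarked parent, and since $vv'\in E(G)$ with $v$ marked, $v'$ is \emph{playable}; the claim is that Alice then marks it. When $w=b_i$ is Bob's move this is precisely Rule R1, which marks the playable unmarked parent $v'$.

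The delicate point, which I expect to be the main obstacle, is to rule out that one of Alice's \emph{own} moves marks such a $w$ while leaving the second parent unmarked, and symmetrically that a lower-priority rule ever acts while a playable unmarked parent of $b_i$ is still available. Here I would exploit the priority ordering of the rules together with the ordering $\le_{G_m}$. First, R1 has priority over Rules R2--R4, so whenever a child of a $G$-root-edge trapezoid is marked and one of its parents is marked, the other parent, being playable through the edge $vv'$, is marked by R1 before any other rule can fire; in particular, when a move by R3 marks a pivot $v=a_i$ sharing a root edge with $b_i$, the common parents of $v$ and $b_i$ must already both be marked, for otherwise R1 rather than R3 would have applied. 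Second, for R4 the ordering $\le_{G_m}$ lists the sets $V_i$ by increasing distance to $uu'$, so an unmarked playable parent (in $V_i$) is strictly smaller than its child (in $V_{i+1}$) and would be chosen first; hence if R4 marks a child of a $G$-root-edge trapezoid, the second parent is already marked, the remaining possibility being a sibling neighbor handled by the induction hypothesis. Finally, a move by R2 marks the pivot $p(vb_i)$, whose two parents $v$ and $b_i$ are both marked, and a move by R1 marks the parent $v'$ of $b_i$, which is a sibling of the marked parent $v$, so the induction hypothesis settles its own parents as in the sibling case.

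Assembling these cases shows that every rule preserves the strengthened invariant, completing the induction; restricting to the configurations reached after Alice's moves then gives Observation~\ref{obs:out-both-parents}. The step that must be carried out with genuine care is the verification of the structural claims about the trapezoidal decomposition that the whole argument rests on: that the $G$-neighbors of a child vertex are exactly its parents and its siblings within the same trapezoid, and that consecutive root-edge endpoints are siblings of one common parent trapezoid. Once these are established, the remaining reasoning is the routine case check sketched above.
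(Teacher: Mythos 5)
Your proposal has two genuine flaws. First, the strengthened invariant you induct on --- that the property holds after \emph{every} move, Bob's included --- is false, and your own case analysis contradicts it: when Bob marks a child $w=b_i$ of a trapezoid $T_{vv'}$ with $vv'\in E(G)$, $v$ marked and $v'$ unmarked, the property fails at that very moment and is only restored when Alice answers with Rule R1. This is precisely why the observation is phrased ``after each move of Alice''. The induction therefore cannot proceed one move at a time as you set it up; it must be organized round by round: assume the property holds after Alice's $(i-1)$-th move and verify it after her $i$-th move for \emph{both} newly marked vertices $b_i$ and $a_i$, using Alice's response (the priority of R1 over R2--R4) to repair whatever Bob's move breaks.

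Second, and more seriously, the structural claim you yourself single out as the foundation of the whole argument --- that the $G$-neighbors of a child vertex are exactly its parents and its siblings within the same trapezoid --- is simply false, so the ``genuine care'' you defer to can never be supplied. A child vertex $w\in V_{i+1}$ of $T_{vv'}$ is in general itself an endpoint of root edges and hence has neighbors in $V_{i+2}$ (in Figure~\ref{fig:out-structure}(a), every vertex of $V_1$ has neighbors in $V_2$). Consequently the marked neighbor guaranteed by the connectivity constraint may lie \emph{below} $w$, a case your parent/sibling dichotomy does not cover. The missing idea, which is what the paper means by ``the connectivity constraint'', is a separator argument rather than a local one: $\{v,v'\}$ separates the children of $T_{vv'}$, and everything below them, from $u$; since the marked set is connected and always contains $u=a_0$ (Rule R0), the moment any child of $T_{vv'}$ is marked, at least one of $v,v'$ must already be marked, no matter where the newly marked vertex's marked neighbor sits. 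With that fact in hand, your rule-by-rule analysis (R1's priority forcing the second parent when $vv'\in E(G)$, R2 marking a pivot whose two parents are marked, R3 blocked whenever R1 could fire, and the $\le_{G_m}$ ordering argument for R4) is essentially sound, and is in fact more detailed than the paper's own justification, which disposes of the observation with exactly this connectivity remark plus Rule R1.
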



We are now able to prove the main result of this section.

\begin{theorem}\label{th:out-marking}
If $G$ is a connected outerplanar graph, then $\colcg(G)\le 5$.
\end{theorem}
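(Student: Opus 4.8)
The plan is to prove that Alice's strategy (Rules R0--R4) maintains the invariant, stated just before the theorem, that after each of her moves the current graph contains no \emph{threatened} vertex; as observed there, this is exactly what guarantees that Bob can never make a vertex saturated, and hence that $\colcg(G)\le 5$. I would argue by induction on the number of moves played. The base case covers the first two moves of Alice: after her first move only $u$ is marked, and after her second move (by Observation~\ref{obs:out-uu'}) the marked set is the connected set $\{u,u',a_1\}$ of at most three vertices, and one checks directly that no vertex then has four marked neighbours, so none is threatened. For the inductive step I would assume the invariant holds after $a_{i-1}$, let Bob play $b_i$, and show that Alice's response $a_i$ (the first applicable rule) destroys every threat that $b_i$ may have created.

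The heart of the argument is a counting bound on the marked neighbours of a hypothetical threatened vertex $z\in V_i$. Since $E(G)\subseteq E(G_m)$, I would bound the marked $G$-neighbours of $z$ by the marked $G_m$-neighbours, which split into three groups: its parents in $V_{i-1}$ (at most two, and two only if $z$ is a pivot), its neighbours inside the linear forest $G_m[V_i]$ (at most two), and its children in $V_{i+1}$, coming from the at most two root edges $zz_1$, $zz_2$ on which $z$ lies. The key is the trade-off provided by Observation~\ref{obs:out-both-parents}: if $z$ is unmarked and some child of $T_{zz_1}$ is marked, then $zz_1\notin E(G)$; but then the co-parent $z_1$, whose only $G_m$-adjacency to $z$ is the root edge $zz_1$, is \emph{not} a $G$-neighbour of $z$ and so does not contribute to the same-layer count. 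Thus marked children from $T_{zz_1}$ and a marked contribution from $z_1$ are mutually exclusive, and likewise for $T_{zz_2}$ and $z_2$; this is what keeps the total from reaching the fatal value four together with a playable neighbour.

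It then remains to control the marked children themselves, and this is where Rules R1--R3 enter. The crucial observation is that a marked child $w$ of $z$ that actually counts (i.e.\ $wz\in E(G)$) makes $z$ playable, so if Bob ever marks such a $w=b_i$, then $z$ is a playable unmarked parent of $b_i$ and Rule~R1 forces Alice to mark $z$ at once; hence $z$ cannot remain unmarked while carrying a Bob-marked $G$-child. The only children of an unmarked $z$ that can persist are therefore those marked by Alice, and for the pivot --- the unique child adjacent to both parents, and thus the most dangerous one --- Rules~R2 and R3 ensure it is marked by Alice (as $p(vb_i)$, or as the co-parent pivot $v$) as soon as its two parents are available, before it can sit threatened. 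Assembling these facts, I would show that after $a_i$ every unmarked vertex with a playable neighbour has at most three marked neighbours, completing the induction.

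I expect the main obstacle to be the exhaustive case analysis in the inductive step, for two reasons. First, $z$ may be a parent of two distinct downward trapezoids at once, and a single Bob move together with the pivot clauses of R1 (where the two parents of $b_i$ may both be unmarked and playable, forcing a choice) creates several configurations that must each be checked to respect the trade-off above. Second, and more delicate, Bob's move $b_i$ may simultaneously threaten more than one vertex, yet Alice answers with a single mark; I would need to show that any two vertices threatened by the same $b_i$ are forced by the tree-of-trapezoids structure to share the mark Alice plays (typically a common parent or a common pivot), and in particular that the default Rule~R4, marking the $\le_{G_m}$-smallest playable vertex, is only ever invoked in positions where no threat is present. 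Verifying that R4 never leaves a threat standing --- that is, that threats only arise in the structured situations handled by R1--R3 --- is the part I expect to require the most care.
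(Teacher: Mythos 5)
Your framework is exactly the paper's own (the rules R0--R4, the invariant that no threatened vertex survives a move of Alice, Observations~\ref{obs:out-uu'} and~\ref{obs:out-both-parents}, and the use of R1 to exclude Bob-marked children of an unmarked parent), so the approach is the right one; but the proposal stops precisely where the work begins, and two of its load-bearing steps are missing or would fail as stated. First, the counting trade-off is insufficient: Observation~\ref{obs:out-both-parents} lets you discount the co-parent $z_1$ once a child of $T_{zz_1}$ is marked, but a single trapezoid can contribute \emph{several} marked children adjacent to $z$ in $G$, so mutual exclusivity with one co-parent does not keep the total below four. Your R1 argument correctly rules out Bob-marked children, but you have no argument for \emph{Alice}-marked children below an unmarked $z$: the sentence invoking R2/R3 conflates the pivot child of a trapezoid below $z$ with $z$ itself being a pivot, and in any case bounds nothing. (The paper's corresponding step is different and is what you are missing: if Alice marked a child $w$ below the unmarked $t$, then --- since otherwise $t$ would have been marked in priority --- none of the edges $tv$, $tv'$, $tt^\ell$, $tt^r$ lies in $G$, so $w$ is $t$'s \emph{only} marked neighbor and $t$ is not threatened.) Second, the R4 issue you defer is genuinely dangerous in your forward framing: R4 marks the $\le_{G_m}$-smallest playable vertex, which can perfectly well be the layer-neighbor $t^\ell$ of an unmarked pivot $t$ that already has three marked neighbors, i.e.\ Alice's own default move is a candidate for creating the fourth mark; that this never produces a lost position has to be \emph{derived}, not hoped for.

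The paper closes all of this at once by arguing backwards instead of forwards. It takes the minimal index $i$ for which a threatened vertex $t$ exists after $a_i$ and pins down the structure of $t$: $t$ must be a child of some trapezoid $T_{vv'}$; connectivity forces one of $t^\ell,t^r$ to be marked; the claim above shows no marked child of $T_{t^\ell t}$ or $T_{tt^r}$ is a neighbor of $t$; hence the four marked neighbors of $t$ are exactly $v,v',t^\ell,t^r$, so $t$ is the pivot of $T_{vv'}$. Tracking which of these four was marked last, R2's priority excludes $v,v'$, so $b_i\in\{t^\ell,t^r\}$, and then R3 would have made Alice mark $t$ --- a contradiction. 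This single structural chain replaces the entire forward case analysis you anticipate: simultaneous threats and R4 never need separate treatment, because \emph{any} threatened vertex whatsoever leads to the same contradiction. Your plan assembles the right ingredients, but without this pinning-down argument (or an equivalent) it remains a program rather than a proof.
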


\begin{proof}
We assume that we are given an outerplanar embedding of $G_m$ and 
its trapezoidal representation, as previously discussed.
Clearly, it suffices to prove that if Alice applies the above described strategy, then,
after each move of Alice, $G$ contains no threatened vertex.
This is clearly the case after the first and second move of Alice since,
at that point, only one or three vertices have been marked, respectively.

Suppose to the contrary that, after Bob has marked the vertex $b_i$
and Alice has marked the vertex $a_i$, $i\ge 2$,
$t$ is a threatened vertex in $G$, and that $i$ is the smallest index with this property,
which implies that $a_i$ or $b_i$ is a marked neighbor of $t$.
Thanks to Observation~\ref{obs:out-uu'}, we know that both $u$ and $u'$
have been marked. Therefore, $t$ is necessarily a child vertex of some 
trapezoid $T_{vv'}$ (we may have $vv'=uu'$).
Let $t^\ell$ and $t^r$ denote the left and right neighbors of $t$ (in $G_m$), if
they exist.
Note that at least one of them must exist, since otherwise $t$ would 
have at most two marked neighbors, and thus could not be a threatened vertex.
Since $t$ has four marked neighbors,
at least one of $t^\ell$, $t^r$ must be marked, since otherwise no vertex lying
below $t$ could have been marked, due to the connectivity constraint, so that, again,
$t$ would have at most two marked neighbors.
Thanks to Observation~\ref{obs:out-both-parents}, we thus get that both $v$ and $v'$ are marked
if $vv'$ is an edge in $G$.

We now claim that neither $T_{t^\ell t}$ nor $T_{tt^r}$ contains a marked child vertex
which is a neighbor of $t$.
Indeed, such a vertex, say $w$, cannot have been marked by Bob since, 
by Rule R1, Alice would have marked $t$ just after Bob had marked the first such child vertex 
of the corresponding trapezoid.
The vertex $w$ has thus been marked by Alice which implies, since $t$ is unmarked, that
none of the edges $tv$, $tv'$, $tt^\ell$ or $tt^r$ belong to $G$ (otherwise $t$ would have
been marked in priority by Alice), and that $w$ is the only marked neighbor of $t$, so that
$t$ cannot be a threatened vertex.

Therefore, the four marked neighbors of $w$ are necessarily $v$, $v'$, $t^\ell$ and $t^r$.
Hence, $t$ is the pivot of $T_{vv'}$, which implies, since $t$ is unmarked, that 
$t^\ell$ has been marked after $v$, and that $t^r$ has been marked after $v'$,
so that $b_i\in\{t^\ell,t^r\}$. (Note here that we cannot have $b_i\in\{v,v'\}$,
since this would imply $a_i\in\{t^\ell,t^r\}$, contradicting the priority of rule R2.)
But in each case, that is, $b_i=t^\ell$ or $b_i=t^r$, $t$ would have been marked by Alice,
thanks to Rule R3.

We thus get a contradiction in each case, which concludes the proof of Theorem~\ref{th:out-marking}.
\end{proof}

%
%
%
%

\medskip

Concerning the connected game chromatic number of connected outerplanar graphs, we can now  
prove the following.

\begin{theorem}\label{th:out-coloring}
If $G$ is a connected outerplanar graph, then $\chicg(G)\le 5$.
Moreover, there exist connected outerplanar graphs with $\chicg(G)=4$.
\end{theorem}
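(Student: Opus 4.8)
The bound $\chicg(G)\le 5$ for every connected outerplanar graph $G$ is immediate: by Theorem~\ref{th:out-marking} and Observation~\ref{obs:inequality-chicg} we have $\chicg(G)\le\colcg(G)\le 5$. So the real content is the existence statement, for which the plan is to exhibit one explicit graph. I would take the maximal outerplanar graph $G_0$ on eight vertices $x,y,p,q,s,s',t,t'$ whose edge set consists of the ``core'' edges $xy,\ xp,\ yp,\ xq,\ yq$ (so that $\{x,y,p,q\}$ induces $K_4$ minus the edge $pq$), together with $xs,\ sp,\ ps',\ s'y$ and $xt,\ tq,\ qt',\ t'y$. Here $x,y$ have degree~$5$, the two pivots $p,q$ have degree~$4$, and the four vertices $s,s',t,t'$ have degree~$2$; one checks that $x,s,p,s',y,t',q,t$ is an outer cycle whose chords do not cross, so $G_0$ is maximal outerplanar. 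I would then prove $\chicg(G_0)\ge 4$ and $\chicg(G_0)\le 4$ separately.

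For the lower bound I would describe a winning strategy for Bob with three colors, the engine being that with only three colors even the degree-$5$ vertices $x,y$ become saturable as soon as three distinctly colored neighbors appear, while the degree-$2$ vertices $s,s',t,t'$ supply the playability needed to respect the connectivity constraint. The crucial feature is the edge $xy$: whenever both $x$ and $y$ are colored they receive distinct colors, and then each pivot already sees two colors and is one move from saturation, giving Bob a genuine double threat that Alice cannot parry in a single move. I would organize the argument as a short case analysis on Alice's first move, whose three orbits under the automorphisms of $G_0$ are $\{x,y\}$, $\{p,q\}$, and $\{s,s',t,t'\}$. In each case Bob forces an uncolored core vertex to see all three colors: if Alice opens on a pivot, Bob colors one of $x,y$, after which completing the forced coloring of the core makes a degree-$5$ vertex saturated; if she opens on $x,y$ or a sibling, Bob colors the complementary core vertex and wins via the double threat on $\{p,q\}$ (or on the remaining degree-$5$ vertex). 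In every branch one verifies that Alice's attempts to defuse a threat (coloring a pivot, or colouring a sibling with an already-present color) only transfer the saturation to another core vertex.

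For the upper bound I would give Alice a strategy with four colors. After her first move colors $p$, the only vertices that can ever be saturated are $q$ (degree~$4$) and $x,y$ (degree~$5$), since every other vertex has degree at most~$3<4$. The plan is a reactive single-target defense: Alice colors the remaining pivot $q$ with a color already seen by $x$ and $y$, never gratuitously introduces a new color into the neighborhood of $x$ or $y$, and colors a degree-$5$ vertex the instant it reaches three distinctly colored neighbors; since a single move raises a neighborhood's color count by at most one, she always intervenes before a fourth color can be added. The key point to establish is that, after $p$ is colored, no two of $\{x,y,q\}$ can be driven simultaneously to ``one move from saturation'': this is where I expect the main obstacle to lie, because $x,y,q$ share the dense core and each degree-$5$ threat must pass through the three-distinct-neighbor stage on \emph{Bob's} move for Alice to be caught out. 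Verifying the absence of a four-color double threat (and dually, confirming that Bob's three-color double threat survives every defense) is the genuinely delicate part of the proof; the rest is routine once these two forcing claims are settled.
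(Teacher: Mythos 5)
Your first half ($\chicg(G)\le 5$) is exactly the paper's argument, and your lower-bound sketch for the example ($\chicg\ge 4$ via a three-color case analysis on Alice's opening, exploiting the edge $xy$ and the double threat on the two pivots) is sound in outline --- I checked that Bob does win with three colors on your $G_0$. The genuine gap is the upper bound $\chicg(G_0)\le 4$, which you yourself defer (``the genuinely delicate part''). Your stated rules for Alice do not constitute a strategy, and there are positions where they fail to exclude losing moves. Concretely: Alice plays $p$ (color 1), Bob plays $s$ (color 2), Alice is forced to answer on $x$ with a new color, say 3, and Bob plays $t$ (color 1). Now $y$ and $q$ both see exactly $\{1,3\}$, and $t'$ is adjacent to both but not yet playable. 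One can check that Alice's \emph{unique} non-losing move is to color $s'$ with color 3: any move coloring $y$ or $q$ (necessarily with 2 or 4) lets Bob saturate the other via $t'$ or $s'$ with the complementary color, and coloring $s'$ with 2 or 4 lets Bob saturate $y$ by playing $q$ with the complementary color. Your rules (``color $q$ with a repeated color'', ``guard the neighborhoods of $x$ and $y$'', ``color a degree-5 vertex at three colors'') do not mandate $s'(3)$ --- indeed coloring $q$ here is impossible with a repeated color and loses with any legal color, and the losing moves $y(2)$, $y(4)$ saturate the degree-\emph{four} vertex $q$, which your reactive rules do not guard at all. So the ``key point'' you flag (no four-color double threat) is not just delicate, it is false as an invariant of your rule set; only a genuinely position-specific analysis rescues $G_0$, and you have not carried it out.

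The paper avoids all of this by designing the example so that the upper bound is structurally trivial: its graph is the $6$-cycle $v_0v_1\dots v_5$ with chords $v_1v_4$, $v_2v_4$, $v_1v_5$, in which only \emph{two} vertices ($v_1$ and $v_4$) have degree at least $4$, and they are adjacent. With four colors Alice simply colors $v_1$ and then $v_4$ on her first two moves (always possible, since $v_4$ has at most two colored neighbors at that point), after which every uncolored vertex has degree at most $3<4$ and can always be colored. Your $G_0$ instead has four saturable vertices $x,y,p,q$ forming $K_4$ minus an edge, so Alice cannot neutralize them in her opening and must play reactively --- which is precisely why your upper bound becomes a hard game-tree problem. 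Either complete that exhaustive analysis for $G_0$ (it appears to be true, but the saving moves are prophylactic and essentially unique in several lines), or replace $G_0$ by a graph in which the vertices of degree at least $4$ form a small clique-like set Alice can color immediately, as the paper does.
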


\begin{figure}
\begin{center}
\begin{tikzpicture}[x=1cm,y=1cm]
\node[scale=0.7,draw,circle,fill=white] (a) at (0,0) {};
\node[above] at (0,0.2) {$v_1$};
\node[scale=0.7,draw,circle,fill=white] (b) at (2,-1) {};
\node[right] at (2.2,-1) {$v_2$};
\node[scale=0.7,draw,circle,fill=white] (c) at (2,-3) {};
\node[right] at (2.2,-3) {$v_3$};
\node[scale=0.7,draw,circle,fill=white] (d) at (0,-4) {};
\node[below] at (0,-4.2) {$v_4$};
\node[scale=0.7,draw,circle,fill=white] (e) at (-2,-3) {};
\node[left] at (-2.2,-3) {$v_5$};
\node[scale=0.7,draw,circle,fill=white] (f) at (-2,-1) {};
\node[left] at (-2.2,-1) {$v_0$};
\draw[-] (a) -- (b);
\draw[-] (b) -- (d);
\draw[-] (a) -- (d);
\draw[-] (a) -- (e);
\draw[-] (a) -- (f);
\draw[-] (b) -- (c);
\draw[-] (c) -- (d);
\draw[-] (d) -- (e);
\draw[-] (e) -- (f);
\end{tikzpicture}
\caption{\label{fig:out-tight}An outerplanar graph with connected game chromatic number~4.}
\end{center}
\end{figure}

\begin{proof}
From Observation~\ref{obs:inequality-chicg} and Theorem~\ref{th:out-marking}, we get 
$\chicg(G)\le \colcg(G)\le 5$.
For the second part of the statement, consider the outerplanar graph
$G$ depicted in Figure~\ref{fig:out-tight}.
We first prove that Bob has a winning strategy when playing the connected
coloring game on $G$ with three colors. 
Thanks to the symmetries in $G$, and up to permutation of colors,
Alice has three possible first moves, that we
consider separately.
\begin{enumerate}
\item If Alice colors $v_0$ with color $1$, then Bob colors $v_1$ with color~$2$.
Now, if Alice colors $v_5$ with color~$3$ then Bob colors $v_2$ with color~$1$
so that $v_4$ is saturated, while
if Alice colors $v_2$ or $v_4$ with color~$1$ (resp. with color~$3$),
then Bob colors $v_3$ with color~$3$ (resp. with color~$1$),
so that $v_4$ or $v_2$ is saturated.
\item If Alice colors $v_1$ with color $1$, then Bob colors $v_0$ with color~$2$, 
and the so-obtained configuration is similar to that of the previous case.
\item If Alice colors $v_2$ with color $1$, then Bob colors $v_3$ with color~$2$.
Now, if Alice colors $v_4$ with color~$3$ then Bob colors $v_5$ with color~$2$
so that $v_1$ is saturated, while
if Alice colors $v_1$ with color~$2$ (note that using color~$3$ would saturate $v_4$),
then Bob colors $v_5$ with color~$3$,
so that $v_4$ is saturated.
\end{enumerate}

We thus get $\chicg(G)\ge 4$. To finish the proof, we need to show that Alice
has a winning strategy when playing the connected
coloring game on $G$ with four colors. 
Since $v_1$ and $v_4$ are the only vertices with degree at least~4 in $G$,
and since they are connected by an edge, Alice can color these two vertices
in her first two moves.
The remaining uncolored vertices can then always be colored since their degree is less
than the number of available colors. 
\end{proof}

By Observation~\ref{obs:inequality-chicg}, the second part of the statement
of Theorem~\ref{th:out-coloring}
directly implies the following.

\begin{corollary}
There exist connected outerplanar graphs with $\colcg(G)\ge 4$.
\end{corollary}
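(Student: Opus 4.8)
The plan is to derive the corollary directly from the second part of Theorem~\ref{th:out-coloring}, using the monotone chain of inequalities recorded in Observation~\ref{obs:inequality-chicg}. Concretely, I would take the connected outerplanar graph $G$ of Figure~\ref{fig:out-tight}, for which Theorem~\ref{th:out-coloring} asserts $\chicg(G)=4$, and invoke the first inequality $\chicg(G)\le\colcg(G)$ of Observation~\ref{obs:inequality-chicg}. This immediately gives $\colcg(G)\ge\chicg(G)=4$. Since the graph of Figure~\ref{fig:out-tight} is a connected outerplanar graph, it is itself a witness for the statement, and the proof is complete in one line.

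The only point deserving a word of care is that Observation~\ref{obs:inequality-chicg} is stated for connected graphs, so I would first note explicitly that the graph of Figure~\ref{fig:out-tight} is connected (it is drawn as a single outerplanar block), which guarantees that the comparison $\chicg(G)\le\colcg(G)$ applies to it. In fact the full strength of Theorem~\ref{th:out-coloring} is not needed here: the lower-bound half alone, namely that Bob wins the connected coloring game on $G$ with three colors and hence $\chicg(G)\ge 4$, already forces $\colcg(G)\ge 4$ through the same inequality. The intuition behind that inequality is exactly the one used for the ordinary game in the introduction: any winning strategy of Alice for the connected marking game with $k$ colors, which keeps every unmarked vertex with at most $k-1$ marked neighbors, can be replayed move for move as a strategy for the connected coloring game, ensuring that every uncolored vertex sees at most $k-1$ colors, so $\chicg\le\colcg$.

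There is no genuine obstacle in this argument: all the combinatorial content has already been absorbed into Theorem~\ref{th:out-coloring}, whose proof constructs and analyses the graph of Figure~\ref{fig:out-tight}, and the corollary is a one-line consequence of the coloring/marking comparison. If anything, the only thing to double-check is that the chosen witness is simultaneously connected and outerplanar, both of which are visible from its drawing, so that it is a legitimate example within the class of connected outerplanar graphs under consideration.
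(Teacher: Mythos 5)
Your proof is correct and follows exactly the paper's route: the corollary is obtained by combining the lower-bound part of Theorem~\ref{th:out-coloring} (the graph of Figure~\ref{fig:out-tight} has $\chicg(G)=4$) with the inequality $\chicg(G)\le\colcg(G)$ of Observation~\ref{obs:inequality-chicg}. The extra remarks about connectivity and the replay argument behind the inequality are harmless but already covered by the paper's statements.
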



\section{Discussion}\label{sec:discussion}

We have introduced in this paper a connected version of the graph coloring
and graph marking games.
We have proved in particular that the connected game coloring number of every
connected outerplanar graph is at most~5, and that there exist infinitely many
connected bipartite graphs on which Alice wins the connected coloring game with two colors
but loses the game if the number of colors is at least three.

We conclude this paper by listing some open questions that should be considered
for future work.

\begin{enumerate}

\item What is the optimal upper bound on the connected game coloring number and on the connected
game chromatic number of connected outerplanar graphs? We know that both these values are either 4 or~5.

\item What is the optimal upper bound on the connected game coloring number and on the connected
game chromatic number of connected planar graphs?

\item Does there exist, for every two integers $k\ge 3$ and $p\ge 1$, 
a connected graph $G_{k,p}$ on which Alice wins the connected coloring game 
with $k$ colors, while Bob wins the game with $k+p$ colors?

\item Is the connected game coloring number a monotonic parameter, that is, is it true
that for every connected subgraph $H$ of a connected graph $G$, the inequality $\colcg(H)\le\colcg(G)$ holds?

\item Does there exist a connected graph $G$ for which $\chig(G) < \chicg(G)$? or $\colg(G) < \colcg(G)$?
(That is, is it possible that the connectivity constraint is in favour of Bob?)
\end{enumerate}

\bigskip

\noindent{\bf Acknowledgments.}
The first and third authors have been supported by the ANR-14-CE25-0006 project of the French National Research Agency,
and the fourth author by 
Grant mumbers NSFC 11971438, ZJNSF LD19A010001 and 111 project of the Ministry of Education of China.

The work presented in this paper has been initiated while the first author was visiting LaBRI, whose hospitality
was greatly appreciated.
It has been pursued during the 9th Slovenian International Conference on Graph Theory (Bled'19), attended 
by the third and fourth authors, who warmly acknowledge the organizers for
having provided a very pleasant and inspiring atmosphere.


\end{document}